\DeclareMathAlphabet{\mathpzc}{OT1}{pzc}{m}{it}
\def\Ddots{\mathinner{\mkern1mu\raise\p@
\vbox{\kern7\p@\hbox{.}}\mkern2mu
\raise4\p@\hbox{.}\mkern2mu\raise7\p@\hbox{.}\mkern1mu}}
\renewcommand{\maketag@@@}[1]{\hbox{\m@th\normalsize\normalfont#1}}%
\newcommand{\framedbox}[2][0.95\textwidth]{
  \centering
  \tikzstyle{mybox} = [draw=black,line width=1.2pt,inner sep=8pt]
  \begin{tikzpicture}
   \node [mybox] (fig){%
    \begin{minipage}{#1}{#2}\end{minipage}
   };
  \end{tikzpicture}
}
\algrenewcommand\alglinenumber[1]{\tiny #1:}
\newcommand{\algstrut}[1][\algruledefaultfactor]{\vrule width 0pt
depth .25\baselineskip height #1\baselineskip\relax}
\newcommand*{\algrule}[1][\algorithmicindent]{\hspace*{.5em}\vrule\algstrut
\hspace*{\dimexpr#1-.5em}}
\def\ALG@printindent{%
    \ifnum \theALG@nested>0
    \ifx\ALG@text\ALG@x@notext
    \else
    \unskip
    \ALG@printindent@tempcnta=1
    \loop
    \algrule[\csname ALG@ind@\the\ALG@printindent@tempcnta\endcsname]%
    \advance \ALG@printindent@tempcnta 1
    \ifnum \ALG@printindent@tempcnta<\numexpr\theALG@nested+1\relax
    \repeat
    \fi
    \fi
}%
\patchcmd{\ALG@doentity}{\noindent\hskip\ALG@tlm}{\ALG@printindent}{}{\errmessage{failed to patch}}
\newcommand{\norm}[1]{\left\lVert#1\right\rVert}
\newcommand{\abs}[1]{\left\lvert#1\right\rvert}
\newcommand{\oset}[3][0ex]{%
  \mathrel{\mathop{#3}\limits^{
    \vbox to#1{\kern-2\ex@
    \hbox{$\scriptstyle#2$}\vss}}}}
\begin{document}
\sloppy
\title{ A Somewhat Homomorphic Encryption Scheme based on Multivariate Polynomial Evaluation  }
\author{Uddipana Dowerah and Srinivasan Krishnaswamy \vspace{-2ex}
\institute{ Indian Institute of Technology, Guwahati \\
\email{\{d.uddipana, srinikris\}@iitg.ac.in}}}

\maketitle

\begin{abstract}
We propose a symmetric key homomorphic encryption scheme based on the evaluation of multivariate polynomials over a finite field. The proposed scheme is somewhat homomorphic with respect to addition and multiplication. Further, we define a generalization of the Learning with Errors problem called the Hidden Subspace Membership problem and show that the semantic security of the proposed scheme can be reduced to the hardness of this problem. 

\keywords
{Homomorphic Encryption $\cdot$ Multivariate Polynomials $\cdot$  Learning with Errors $\cdot$ Hidden Subspace Membership  }

\end{abstract}
\section{Introduction}
An encryption scheme is said to be homomorphic for a set of functions if these functions can be evaluated on the plaintexts from the corresponding ciphertexts without decryption. A scheme is called fully homomorphic if functions with an arbitrary number of additions and multiplications can be homomorphically evaluated. If the number of additions and multiplications is limited, then the scheme is said to be somewhat homomorphic. Homomorphic encryption has wide applications in multiparty computation \cite{cramer2001multiparty}, secure electronic voting \cite{benaloh1987verifiable}, private information retrieval \cite{lipmaa2003diophantine} etc. 

 The first feasible construction of a fully homomorphic encryption scheme was proposed in \cite{gentry2009fully}. This scheme is based on {ideal lattices} and uses the concept of bootstrapping to convert a somewhat homomorphic scheme to a fully homomorphic one. This concept was subsequently used in many other constructions \cite{brakerski2011fully,gentry2011fully,smart2010fully,van2010fully}. As an alternative to bootstrapping, a number of schemes were proposed \cite{brakerski2012fully,brakerski2014leveled,brakerski2014efficient,gentry2010simple,gentry2013homomorphic} based on the Learning with Errors (LWE) problem \cite{regev2009lattices,regev2010learning}.
 
 A multivariate polynomial based encryption scheme depends on the difficulty of solving a system of non-linear equations over a finite field. Polly Cracker schemes \cite{fellows1993combinatorial,barkee1994you} were attempts at constructing homomorphic encryption schemes using multivariate polynomials. The security of these schemes depends on the Ideal Membership Problem and relies on the hardness of computing a Gr\"{o}bner basis for an ideal. These schemes, however, are vulnerable to Linear Algebra-based attacks and attacks using efficient Gr\"{o}bner basis construction algorithms \cite{barkee1994you,levy2009survey}. Attempts have been made to overcome these vulnerabilities with the addition of noise as proposed in \cite{albrecht2011polly,herold2012polly}. 
 
 In this paper, we propose a multivariate polynomial based homomorphic encryption scheme. The encryption process involves noisily evaluating a multivariate polynomial at a set of secret points and adding the scaled plaintext bit to each of these evaluations. The polynomial is randomly chosen from a secret ideal. In order to prove the security of the scheme, we have defined a generalization of the Learning with Errors (LWE) problem \cite{regev2009lattices,regev2010learning} called the Hidden Subspace Membership (HSM) problem and have shown that the semantic security of the proposed scheme can be reduced to the hardness of this problem. 
 
  The remainder of this paper is organized as follows: Section \ref{sec2} contains the preliminaries for the proposed work. Section \ref{Construction} contains the construction of the proposed scheme.  In section \ref{HP}, we discuss the homomorphic properties of the scheme and section \ref{security} deals with the analysis of its security.

\section{Preliminaries} \label{sec2}

The following notations are used in the paper. $\lambda$ denotes the security parameter. $\mathbb{Z}$ and $\mathbb{N}$ denote the set of integers and natural numbers respectively. Given a set ${S}$, $x\oset{\scriptscriptstyle{\$}}{\leftarrow}\mathcal{S}$ means that $x$ is sampled  uniformly at random from ${S}$. For a real number $x$, $\lfloor x \rfloor$, $\lceil x \rceil $ and $\lfloor x \rceil$ denote the rounding of $x$ down, up or to the nearest integer of $x$. $\mathbb{Z}_q$ denotes the set of integers in the interval $(-\frac{q}{2},\frac{q}{2}]$. $\mathbb{F}_{q}$ denotes a finite field of cardinality $q$, where $q$ is a prime integer. It represents the elements in the interval \(\left[-\left\lfloor \frac{q}{2} \right\rfloor,\left\lfloor \frac{q}{2} \right\rfloor \right] \). $\mathbb{F}_q[x_1,\hdots,x_\ell]_{\leq r}$ denotes the set of polynomials in $x_1,\hdots,x_\ell$ with coefficients in $\mathbb{F}_q$ of degree $\leq r$. Given a set of points $\{\bm{z}_1,\hdots,\bm{z}_n\} \in \mathbb{F}_q^\ell$ and a polynomial $f \in \mathbb{F}_q[x_1,\hdots,x_\ell]_{\leq r}$, $f(\bm{z}_1,\hdots,\bm{z}_n) \in \mathbb{F}_q^n$ denotes the evaluation of $f$ at these points. We use uppercase bold letters $\bm{A}, \bm{B}, \hdots$ to denote matrices and lowercase bold letters $\bm{a}, \bm{b},\hdots$ to denote vectors. The $\ell_i$ norm of a vector $\bm{v}$ is denoted by \(\norm{\bm{v}}_i\) and the inner product of two vectors $\bm{v}_1, \bm{v}_2$ is denoted using $\langle \bm{v}_1,\bm{v}_2\rangle:=\bm{v}_1^T \bm{v}_2\). A function $f(x): \mathbb{N} \rightarrow \mathbb{R}$ is called negligible if, for every $c \in \mathbb{N}$, there exists an integer $n_c$ such that $\vert f(x) \vert < \frac{1}{x^c}$ for all $x > n^c$.

\subsection{Hidden Subspace Membership}
The Hidden Subspace membership (HSM) problem involves distinguishing a random vector from a noisy vector of a given subspace $\mathcal{S}$. We use the game playing framework adopted from \cite{bellare1997concrete} to formally define the HSM problem. 
\begin{definition}
{\bf(Hidden Subspace Membership)}. Let $\mathcal{S}$ be an $l$ dimensional subspace of the vector space $\mathcal{V}:=\mathbb{F}_q^n$, for some $n,q \in \mathbb{N}$ and $l \geq 1$, such that $q$ is prime and let $\mathcal{N}$ be a noise distribution on $\mathbb{F}_q^n$. Then, the HSM problem denoted as \textup{HSM}$_{n,q,\mathcal{N}}$ can be defined in terms of the game shown in Figure \ref{box:box2}. A PPT adversary $\mathcal{A}$ wins the game if it can guess the value of $\beta$ with a non-negligible advantage, given by
{ \begin{align*} 
Adv_{n,q,\mathcal{N}}^{\text{HSM},\mathcal{A}}(\lambda):=\Bigg| Pr[\text{HSM}_{n,q,\mathcal{N}}^{\mathcal{A}}(\lambda)\Rightarrow 1]-\frac{1}{2} \Bigg|
\end{align*}}
\end{definition}
 \begin{story}[!ht] \scriptsize
  \framedbox{\begin{multicols}{4} 
  \setlength{\columnsep}{.5cm}
\underline{\textbf{ Initialize}} 
\begin{algorithmic} \scriptsize
\Begin
\State $\!\mathcal{S} \gets {\mathcal{V}}(\lambda,l)$
\State $\beta \oset{\scriptscriptstyle{\$}}{\leftarrow} \{0,1\} $
\End  \\~\\~\\~\\~\\
\end{algorithmic}

\underline{\textbf{ Sample}( )} 
\begin{algorithmic} \scriptsize
\Begin
\State $\bm{v} \oset{\scriptscriptstyle{\$}}{\leftarrow} \mathcal{S}$, $\bm{e} \oset{\scriptscriptstyle{\$}}{\leftarrow} \mathcal{N} $
\State $\bm{v} \gets\bm{v}+ \bm{e}$
\State \Return $\bm{v}$
\End \\~\\~\\~\\
\end{algorithmic} 

\underline{\textbf{ Challenge}( )} 
\begin{algorithmic} \scriptsize
\Begin
\State $\bm{v} \oset{\scriptscriptstyle{\$}}{\leftarrow} \mathbb{F}_{q}^{n}$
\If {$\beta =1 $}
\State    $ \bm{v} \oset{\scriptscriptstyle{\$}}{\leftarrow} \mathcal{S}$
\State $\bm{e} \oset{\scriptscriptstyle{\$}}{\leftarrow} \mathcal{N}$
\State $\bm{v} \leftarrow \bm{v}+\bm{e}$
\EndIf
\State \Return $\bm{v}$
\End 
\end{algorithmic}

\underline{\textbf{ Finalize ($\beta^{\prime}$)}} 
\begin{algorithmic} \scriptsize
\Begin 
\State \Return $(\beta \!=\! \beta^{\prime})$
\End \\~\\~\\~\\~\\
\end{algorithmic}
  \end{multicols}}
\caption[]{HSM$_{n,q,\mathcal{N}}$ Game}
  \label{box:box2}
\end{story}

This problem can be seen as a generalization of the Decisional Learning with Errors (DLWE) problem which is defined as follows: 
\begin{definition}
{\bf(Decisional Learning With Errors)}.  Given a noise distribution $\mathcal{X}$ on $\mathbb{Z}_q$ and an integer $q=q(\lambda)$, the DLWE problem can be defined in terms of the game shown in Figure \ref{box:box3}. A PPT adversary $\mathcal{A}$ wins the game if it can guess the value of $\beta$ with a non-negligible advantage given by 
\end{definition}
{\footnotesize \begin{align*}
Adv_{n,q,\mathcal{X}}^{\text{LWE},\mathcal{A}}(\lambda):=\Bigg|Pr[\text{LWE}_{n,q,\mathcal{X}}^{\mathcal{A}}(\lambda) \Rightarrow 1]-\frac{1}{2} \Bigg|
\end{align*}}%

\vspace{-3ex}
\begin{story}[H] \scriptsize
  \framedbox{\begin{multicols}{4} 
  \setlength{\columnsep}{.55cm}
  
\underline{\textbf{ Initialize}} 
\begin{algorithmic} \scriptsize
\Begin
\State $n\gets $ $n(\lambda)$
\State $\bm{s} \oset{\tiny{\$}}{\leftarrow} \mathbb{Z}_q^{n}$
\State $\beta \oset{\scriptscriptstyle{\$}}{\leftarrow} \{0,1\} $
\End \\~\\~\\~\\~\\
\end{algorithmic}

\underline{\textbf{ Sample}( )} 
\begin{algorithmic} \scriptsize
\Begin
\State $\bm{a} \oset{\scriptscriptstyle{\$}}{\leftarrow} \mathbb{Z}_q^{n}$
\State $ {e} \oset{\scriptscriptstyle{\$}}{\leftarrow} \mathcal{X} $
\State $b \!= \left[ \bm{a}\!^{T} \!\!\bm{s} \!+\!{e}\right]_q$
\State \Return $(\bm{a},b)$
\End \\~\\~\\
\end{algorithmic}

\underline{\textbf{ Challenge}( )} 
\begin{algorithmic} \scriptsize
\Begin
\State $(\bm{a},b) \oset{\scriptscriptstyle{\$}}{\leftarrow} \mathbb{Z}_{q}^{n+1}$
\If {$\beta =1 $}
\State    $ \bm{a} \oset{\scriptscriptstyle{\$}}{\leftarrow} \mathbb{Z}_q^{n}$
\State ${e} \oset{\scriptscriptstyle{\$}}{\leftarrow} \mathcal{X}$
\State $b \!= \!\left[\! \bm{a}\!^{T} \!\!\bm{s} \!+\!{e}\!\right]_q$
\EndIf
\State \Return $(\bm{a},b)$
\End 
\end{algorithmic}

\underline{\textbf{ Finalize ($\beta^{\prime}$)}} 
\begin{algorithmic} \scriptsize
\Begin
\State \!\!\Return $(\beta \!=\! \beta^{\prime})$
\End \\~\\~\\~\\~\\~\\
\end{algorithmic}
  \end{multicols}}
\caption[]{DLWE$_{n,q,\mathcal{X}}$ Game}
  \label{box:box3}
\end{story}

An LWE sample $(\bm{a}_i,b_i)\in \mathbb{Z}_q^n \times \mathbb{Z}_q$ that satisfies $\bm{a}_i^{T} \bm{s} +e_i =b_i~(\text{mod}~q)$ can be written as: 
{\small \begin{align} \label{lwe}
\begin{bmatrix} \bm{a}_i & -{b}_i \end{bmatrix}\begin{bmatrix}
\bm{s} \\ 1
\end{bmatrix} \approx_{\scriptscriptstyle \mathcal{X}} {0}~(\text{mod}~q)
\end{align}}%
 which is a noisy equation with the noise being sampled from the distribution $\mathcal{X}$ on $\mathbb{Z}_q$. Hence, the DLWE problem can be seen as an instance of the HSM problem where the subspace is $(\bm{s}~1)^{\perp} \subseteq \mathbb{Z}_q^{n}$ and the noise is added only to the last element of the vector. The above discussion is formalized in terms of the following lemma. 

\begin{lemma}
Let $\mathcal{S}$ be an $n$-dimensional subspace of the vector space $\mathbb{Z}_q^{n+1}$. Let $\mathcal{X}$ be a zero mean distribution on $\mathbb{Z}_q$ and let $\mathcal{N}:=(\bm{0} ~\mathcal{X})$, where $\bm{0} \in \mathbb{Z}_q^n$. Then, any PPT adversary $\mathcal{A}$ against the HSM problem \textup{HSM}$_{n+1,q,\mathcal{N}}$ can be transformed into a PPT adversary $\mathcal{B}$ against the LWE problem \textup{DLWE}$_{n,q,\mathcal{X}}$ such that
\end{lemma}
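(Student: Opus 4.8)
The plan is to build $\mathcal{B}$ as a thin wrapper around $\mathcal{A}$ that translates between the two games using exactly the correspondence made explicit in Equation~(\ref{lwe}): a (possibly noisy) LWE sample $(\bm{a},b)$ is nothing but a (possibly noisy) vector of the hyperplane $\mathcal{S}=(\bm{s}~1)^{\perp}\subseteq\mathbb{Z}_q^{n+1}$. Concretely, $\mathcal{B}$ receives its own \textbf{Initialize} (which fixes a secret $\bm{s}\oset{\scriptscriptstyle{\$}}{\leftarrow}\mathbb{Z}_q^{n}$ and a bit $\beta$), launches $\mathcal{A}$, and answers each of $\mathcal{A}$'s (polynomially many) oracle queries as follows. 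On a \textbf{Sample}$()$ query, $\mathcal{B}$ calls its own \textbf{Sample}$()$, obtains $(\bm{a},b)$ with $b=[\bm{a}^{T}\bm{s}+e]_q$, and returns $\bm{v}=(-\bm{a}~b)\in\mathbb{Z}_q^{n+1}$; on a \textbf{Challenge}$()$ query, $\mathcal{B}$ calls its own \textbf{Challenge}$()$, obtains $(\bm{a},b)$, and again returns $(-\bm{a}~b)$. When $\mathcal{A}$ outputs a guess $\beta'$, $\mathcal{B}$ forwards $\beta'$ to its own \textbf{Finalize}.

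The next step is to verify that this makes $\mathcal{A}$'s view identical to its view in a genuine $\textup{HSM}_{n+1,q,\mathcal{N}}$ game whose subspace is produced as $\mathcal{S}=(\bm{s}~1)^{\perp}$ for uniform $\bm{s}$. The key identity is the decomposition $(-\bm{a}~b)=(-\bm{a}~\bm{a}^{T}\bm{s})+(\bm{0}~e)$: the first summand lies in $\mathcal{S}$, since its inner product with $(\bm{s}~1)$ is $-\bm{a}^{T}\bm{s}+\bm{a}^{T}\bm{s}=0$, and as $\bm{a}$ ranges uniformly over $\mathbb{Z}_q^{n}$ it ranges uniformly over $\mathcal{S}$; the second summand is exactly a sample from $\mathcal{N}=(\bm{0}~\mathcal{X})$. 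Hence a noisy LWE \textbf{Sample} maps precisely to a uniformly random element of $\mathcal{S}$ perturbed by fresh $\mathcal{N}$-noise, as the HSM \textbf{Sample} procedure prescribes. For the challenge, when $\beta=0$ the pair $(\bm{a},b)$ is uniform on $\mathbb{Z}_q^{n+1}$, so $(-\bm{a}~b)$ is uniform on $\mathbb{Z}_q^{n+1}$; when $\beta=1$ the same computation shows $(-\bm{a}~b)$ is a uniform element of $\mathcal{S}$ plus fresh $\mathcal{N}$-noise — matching the two branches of \textbf{Challenge} in Figure~\ref{box:box2}. Since $\mathcal{B}$ outputs whatever $\mathcal{A}$ outputs, and both games declare a win by the identical predicate $\beta=\beta'$, we obtain $Adv_{n+1,q,\mathcal{N}}^{\textup{HSM},\mathcal{A}}(\lambda)=Adv_{n,q,\mathcal{X}}^{\textup{LWE},\mathcal{B}}(\lambda)$, and $\mathcal{B}$ is PPT because it performs one vector negation per query on top of running $\mathcal{A}$.

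The step that needs the most care is the bookkeeping about the distribution of the hidden subspace. Not every $n$-dimensional subspace of $\mathbb{Z}_q^{n+1}$ has the form $(\bm{s}~1)^{\perp}$ — only the hyperplanes whose normal vector has nonzero last coordinate — so the reduction is an exact equality of advantages precisely when the HSM subspace generator $\mathcal{V}(\lambda,l)$ is instantiated as ``sample $\bm{s}\oset{\scriptscriptstyle{\$}}{\leftarrow}\mathbb{Z}_q^{n}$, output $(\bm{s}~1)^{\perp}$''; were $\mathcal{V}$ instead uniform over all $n$-dimensional subspaces, one would incur a statistical defect of order $1/q$, which I would simply absorb into a negligible additive term in the bound. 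A second, minor point is the sign of the noise: routing the sample through $\bm{a}\mapsto-\bm{a}$ is what makes the induced perturbation exactly $(\bm{0}~e)$ with $e\sim\mathcal{X}$, rather than $(\bm{0}~{-e})$, so that no symmetry assumption on $\mathcal{X}$ beyond its being a distribution on $\mathbb{Z}_q$ is needed. Everything else is routine verification that the simulated oracle answers have the correct joint distribution.
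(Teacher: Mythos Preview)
Your proof is correct and follows essentially the same reduction as the paper's: build $\mathcal{B}$ by forwarding each DLWE sample $(\bm{a},b)$ to $\mathcal{A}$ as the vector $(\pm\bm{a},\mp b)\in\mathbb{Z}_q^{n+1}$ and pass along $\mathcal{A}$'s guess. Your version is in fact tidier than the paper's, which sends $(\bm{a},-b)$ and hence incurs a sign flip on the noise; your choice of $(-\bm{a},b)$ avoids any implicit symmetry assumption on $\mathcal{X}$, and your explicit remark about the distribution of the subspace $\mathcal{S}$ (that the equality is exact only when $\mathcal{V}(\lambda,l)$ outputs hyperplanes of the form $(\bm{s}~1)^{\perp}$) addresses a point the paper leaves unspoken.
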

\vspace{-3ex}
{\footnotesize \begin{align}
Adv_{n+1,q,\mathcal{N}}^{\text{HSM},\mathcal{A}}(\lambda)=Adv_{n,q,\mathcal{X}}^{\text{DLWE},\mathcal{B}}(\lambda) \notag
\end{align}}%

\begin{proof}
  We construct an adversary $\mathcal{B}$ against \textup{DLWE}$_{n,q,\mathcal{X}}$ from an adversary $\mathcal{A}$ against \textup{HSM}$_{n+1,q,\mathcal{N}}$. Observe that, an HSM sample $\bm{v}_i\leftarrow \bm{v}_i+\bm{e}_i$, where $\bm{v}_i \oset{\scriptscriptstyle{\$}}{\leftarrow} \mathcal{S}$ and $\bm{e}_i \oset{\scriptscriptstyle{\$}}{\leftarrow} \mathcal{N}$, satisfies the equation $\bm{v}_i^T \bm{s}_1 \approx_{\scriptscriptstyle {\mathcal{X}}} 0$, where $\bm{s}_1:=(\bm{s},1) \in \mathbb{Z}_q^{n+1}$. 

When $\mathcal{A}$ queries the \textbf{Sample} oracle of HSM, $\mathcal{B}$ queries the \textbf{Sample} oracle of DLWE to obtain $(\bm{a},b)$ and returns the vector $(\bm{a},-b) \in \mathbb{Z}_q^{n+1}$ to $\mathcal{A}$. Similarly, when $\mathcal{A}$ queries the \textbf{Challenge} oracle of HSM, $\mathcal{B}$ queries the \textbf{Challenge} oracle of DLWE and answers similarly. If $\mathcal{A}$ outputs the correct value of $\beta$ for \textup{HSM}$_{n+1,q,\mathcal{N}}$, then $\mathcal{B}$ also happens to output the correct value of $\beta$ for \textup{DLWE}$_{n,q,\mathcal{X}}$. 

Thus solving the LWE problem over $\mathbb{Z}_q^n$ is equivalent to solving the HSM problem over $\mathbb{Z}_q^{n+1}$ for a subspace of dimension $n$, where only the last entry is affected by noise. 
\end{proof}
\begin{definition}
{\bf(Discrete Gaussian distribution)}. For $\alpha >0$ and for $q=q(\lambda)$, the density function of a Gaussian distribution $ \mathrm{\Psi}_{\alpha} $ over $\mathbb{R}$ is given by, 
{\footnotesize 
$D_{\alpha}(x)=\frac{1}{\sqrt{2\pi}\alpha}exp\left(-\frac{1}{2}\left(\frac{x}{\alpha}\right)^2\right)$}. Then, the discrete Gaussian distribution $\overline{\mathrm{\Psi}}_{\alpha}$ on $\mathbb{Z}_q$ is the distribution with mean zero and standard deviation $\alpha q$, obtained by sampling an element $x \leftarrow D_{\alpha}$ and outputting $\lfloor x \cdot q \rceil ~\text{mod}~q$. 
\end{definition}
When the noise is sampled according to the distribution $\overline{\mathrm{\Psi}}_{\alpha}$ with $\alpha q \geq 2 \sqrt{n}$, there exists a quantum reduction from the decisional variant of the shortest vector problem (G{\footnotesize{AP}}SVP) to the LWE problem \cite{regev2009lattices}. A classical reduction has also been proposed in \cite{peikert2009public}.

We assume that the HSM problem is hard for an $l$-dimensional subspace of an $n$-dimensional space when $n-l$ entries are corrupted by noise sampled from a discrete Gaussian distribution. (The justification for this assumption and a detained analysis of the hardness of the HSM problem will be given in the complete paper.)

\section{The Proposed Scheme} \label{Construction}

Consider an ideal $I $ of the polynomial ring $ \mathbb{F}_q[x_1,\hdots,x_\ell]$, where $q$ is polynomial in the security parameter $\lambda$. Let $I_{\leq r}$ denotes the set of polynomials in $I$ with degree $\leq r$. $ \mathbb{F}_q[x_1,\hdots,x_\ell]_{\leq r} $ is a vector space over $\mathbb{F}_q$ of dimension $N:=\binom{\ell+r}{r}$ and $I_{\leq r}$ is a subspace of this vector space. Let $ dim(I_{\leq r}) $ be the dimension of this subspace. A polynomial $f \in \mathbb{F}_q[x_1,\hdots,x_\ell]_{\leq r}$ evaluated at all points of $ \mathbb{F}_q^{\ell} $ generates a vector space in $ \mathbb{F}_q^{\scaleto{q^\ell}{6.8pt}} $. The set of such vectors obtained by evaluating all polynomials in $ \mathbb{F}_q[x_1,\hdots,x_\ell]_{\leq r} $ constitutes an $N$-dimensional subspace of $ \mathbb{F}_q^{\scaleto{q^\ell}{6.8pt}} $. Evaluating polynomials in $I_{\leq r}$ gives us a $ dim(I_{\leq r}) $-dimensional subspace $\mathcal{V}_{\scriptstyle I_{\leq r}}$. The ciphertext size $n$ is chosen such that $n $ is polynomial in $\lambda$ but is significantly less than $q$ and $dim(I_{\leq r}) < n \leq N$. We then choose $n$ distinct points $ \{\bm{z}_1,\hdots,\bm{z}_n\} \in \mathbb{F}_q^{\ell} $ which satisfy the following conditions:
\begin{enumerate}
\item Every vector in  $ \mathbb{F}_q^n $ can be got by evaluating a polynomial in $ \mathbb{F}_q[x_1,\hdots,x_\ell]_{\leq r} $ at $ (\bm{z}_1,\hdots,\bm{z}_n )$.
\item Every vector in $\mathbb{F}_q^{\scaleto{dim(I_{\leq r})}{6.8pt}}$ can be got by evaluating a polynomial in $I_{\leq r}$ at $ (\bm{z}_1,\hdots,\bm{z}_{\scaleto{dim(I_{\leq r})}{6.5pt}} )$.
\end{enumerate} 
The conditions imposed on $ (\bm{z}_1,\hdots,\bm{z}_n )$ ensure that for every given vector $\bm{s}_2 \in \mathbb{F}_q^{n-\scaleto{dim(I_{\leq r})}{6.5pt}}$, there exists a vector $\bm{s} \in (\mathcal{V}_{\scriptstyle I_{\leq r}})^{\perp}$ such that $\bm{s}=(\bm{s}_1,\bm{s}_2)$.

 Let $\bm{v}_i \in \mathbb{F}_q^{\scaleto{ N}{4pt}}$ be the vector obtained by evaluating the monomials of degree $\leq r$ at the point $\bm{z}_i \in \mathbb{F}_q^\ell$, for $1 \leq i \leq n$. Let $\bm{G} \in \mathbb{F}_q^{n \times \scaleto{ N}{4pt}}$ be the matrix obtained by assigning $\bm{G}(i,:)=\bm{v}_i$ for $1 \leq i \leq n$. Let $f(x_1,\hdots,x_\ell)$ be a polynomial, sampled uniformly at random from $I_{\leq r}$. If $\bm{f} \in \mathbb{F}_q^{\scaleto{ N}{4pt}}$ denotes the coefficient vector of $f$, then \(f(\bm{z}_1,\hdots,\bm{z}_n)=\bm{G} \cdot \bm{f} \in \mathbb{F}_q^n \).

 Let us now consider an example to illustrate the construction of $\bm{G}$. For $\ell=2,r=2$ and $N=6$, the monomials of $\mathbb{F}_q[x_1,x_2]_{\leq 2}$ according to the lexicographic order are $\{1,x_1,x_2,x_1^2,x_1x_2,x_2^2\}$. If $n=3$, then let $\bm{z}_1=(z_{11},z_{21}), \bm{z}_2=(z_{12},z_{22}), \bm{z}_3=(z_{13},z_{23})$ be 3 distinct points in $\mathbb{F}_q^{2}$. Then, $G \in \mathbb{F}_q^{3 \times 6}$ represents the following matrix:
{ \begin{align}
\bm{G} = \begin{bmatrix}
   1 & z_{11} & z_{21} & z_{11}^2 & z_{11}z_{21} & z_{21}^2\\
   1 & z_{12} & z_{22} & z_{12}^2 & z_{12}z_{22} & z_{22}^2\\
   1 & z_{13} & z_{23} & z_{13}^2 & z_{13}z_{23} & z_{23}^2\\
\end{bmatrix}
\end{align}}%

The proposed encryption scheme consists of the following algorithms. The plaintext space is $\{0,1\}$ and the ciphertexts are vectors of order $n$ over $\mathbb{F}_q$. Let $\mathcal{X}$ be a noise distribution that samples its entries from the discrete Gaussian distribution $\overline{\mathrm{\Psi}}_{\alpha}$ on $\mathbb{Z}_q$. 
  \begin{itemize}[leftmargin=*]
\setlength\itemsep{1.5ex}
\item[$  \bullet$] \textbf{KeyGen}($1^{\lambda}$): Consider an ideal $I \subseteq \mathbb{F}_q[x_1,\hdots,x_\ell]$ and a set of $n$ distinct points $ \{\bm{z}_1,\hdots,\bm{z}_n \}  $ in $\mathbb{F}_q^\ell $ and construct the matrix $\bm{G} \in \mathbb{F}_q^{n \times \scaleto{ N}{4pt}}$.  
Choose a vector $\bm{s}=(\bm{s}_1,\bm{s}_2) \in (\mathcal{V}_{\scriptstyle I_{\leq r}})^{\perp}$ and a constant $p \in \mathbb{N}$ such that, if { \(\sigma_{\scriptstyle \bm{s}}:=\sum_{i=1}^{n} s_{i} \)} denotes the sum of the elements of $\bm{s}$ and $\bm{e}$ denotes a vector of order $n$ whose entries are sampled from the distribution $\mathcal{X}$, then $\eta(\alpha q) < \sigma_{\!\scriptstyle \bm{s}}\cdot p \leq \lfloor q/2 \rfloor $ and the distribution of $\langle \bm{s},\bm{e} \rangle$ has a standard deviation $< k(\alpha q)$, where $\eta$ and $k$ are constants that depend on the probability of decryption error and the number of homomorphic operations. 
The secret key is a basis for the ideal $I$, the set of points $ \{\bm{z}_1,\hdots,\bm{z}_n \} $ and the integer $p$. If the choice of $I$ is restricted to a radical ideal whose variety has a finite number of points, then the ideal can also be expressed in terms of its variety.

\item[$  \bullet$] \textbf{Encrypt}$(sk,{m})$:   To encrypt a message $m \in \{0,1\}$, sample a polynomial $f$ of degree $\leq r$, uniformly at random from the ideal $I$ and a vector $\bm{e}=(\bm{0},\overline{\bm{e}})$, where $\overline{\bm{e}}$ is a vector of order $n-\scaleto{dim(I_{\leq r})}{10pt}$ and the entries of $\overline{\bm{e}}$ are chosen according to the distribution $\mathcal{X}$. If $\bm{f}$ denotes the coefficient vector of $f$ and $\bm{1}$ denotes the all 1's vector of order $n$ and $M:=m \cdot p$, then the ciphertext can be computed as
{ \begin{align}
\textstyle \bm{c}=\textstyle  M  \cdot \bm{1}+\bm{G} \cdot \bm{f}+  \bm{e} ~~(\text{mod}~q)
\end{align}}%
\item[$  \bullet$] \textbf{Decrypt}$(sk,\bm{c})$: 
Given the ciphertext $\bm{c}$, the plaintext $m$ can be recovered as 
{\small \begin{align}
 {m}=\left\lfloor \frac{1}{\sigma_{\scriptstyle \bm{s}}\cdot p}\left(\langle \bm{s},\bm{c}\rangle \,\text{mod}\,q\right)\right\rceil\,\text{mod}~2 
\end{align} }%
   \end{itemize}
   
\begin{remark}
This scheme can also be seen as a coding theory based scheme. If the polynomial chosen in the above encryption scheme is $f$, then the ciphertext can be seen as a noisy punctured  Reed-Muller encoding of the polynomial $f+m$.
\end{remark} 

\subsubsection{Correctness of Decryption.} \label{COD}
Since $\bm{s} \in (\mathcal{V}_{\leq r})^{\perp}$, \( \langle \bm{s},\bm{G} \bm{f} \rangle=0~(\text{mod}~q)\). Therefore,
\begin{align}
 \langle \bm{s},\bm{c}\rangle &= m\cdot (\sigma_{\scriptstyle \bm{s}} p)+\langle \bm{s}_2,\overline{\bm{e}}\rangle ~~(\text{mod}~q)
\end{align}
If {\small$ \abs{\langle \bm{s}_2,\overline{\bm{e}}\rangle}  < \left\lfloor\frac{(\sigma_{ \bm{s}}  \cdot p)}{2}\right\rfloor $}, then {\small \( \left\lfloor \frac{1}{(\sigma_{\scriptstyle \bm{s}}\cdot p)}\left(\langle \bm{s},\bm{c}\rangle \,\text{mod}\,q\right)\right\rceil = m~(\text{mod}~2)\)}. If for some $ \epsilon>0 $, it holds that %
{ \begin{align}
\textstyle \underset{{\overline{\bm{e}} {\leftarrow}\mathcal{X}^{n-\scaleto{dim(I_{\leq r})}{6pt}}}}{Pr}\left[ \abs{\langle \bm{s}_2,\overline{\bm{e}}\rangle}  > \left\lfloor\frac{(\sigma_{ \bm{s}}  \cdot p)}{2}\right\rfloor\right] \leq \textstyle \epsilon
\end{align}}
then, the probability of decryption error is at most $ \epsilon $. In other words, the decryption function will output the correct message with probability $1-\epsilon$. Since the entries of $\overline{\bm{e}}$ are sampled from the distribution $\mathcal{X}$, $ \langle \bm{s}_2,\overline{\bm{e}} \rangle $ is distributed with a standard deviation $\norm{\bm{s}_2}_2\alpha q$. If $\mathcal{X}_{\scaleto{\norm{\bm{s}_2}_2\alpha q}{6pt}}$ denotes this distribution, then using Chebyshev's inequality, we can write 
{ \begin{align}
\textstyle \underset{{x {\leftarrow}\mathcal{X}_{\scaleto{\norm{\bm{s}_2}_2\alpha q}{5.2pt}}}}{Pr}\left[\abs{x} > \left\lfloor\frac{(\sigma_{\scriptstyle \bm{s}}\cdot p)}{2} \right\rfloor\right] \leq \textstyle \underset{{x {\leftarrow}\mathcal{X}_{\scaleto{\norm{\bm{s}_2}_2\alpha q}{5.2pt}}}}{Pr}\left[\abs{x} > k\alpha q \right] \leq \frac{1}{k^2}
\end{align}}%
where  \(k < \frac{\sigma_{\bm{s}}  \cdot p}{2\norm{\bm{s}_2}_2\alpha q} $ and $\epsilon=\frac{1}{k^2} $. Therefore, if $p$ is chosen such that $ \sigma_{\bm{s}}\cdot p > (2\norm{\bm{s}_2}_2 \alpha q)/\sqrt{\epsilon}$, then the probability of decryption error is $\leq \epsilon$. 
 
 For the case when only additive homomorphism is required, we can choose $\bm{s}=(\bm{s}_1,\bm{s}_2) \in (\mathcal{V}_{\scriptstyle I_{\leq r}})^{\perp}$ such that $\bm{s}_2$ is of the form $\bm{s}_2=(0,\hdots,0,1)$. Thus $\langle \bm{s},\bm{e} \rangle$ will just be equal to the last entry of $\bm{e}$. However as we will see in the subsequent section that this may not be possible when multiplicative homomorphism is required.

\section{Homomorphic Properties}\label{HP}
 
  If \(\phi:\{0,1\}^t \rightarrow \{0,1\}$ denotes a function to be performed on the plaintexts $m_1,\hdots,m_t$, then homomorphically evaluating $\phi$ involves calculating a new ciphertext \(\bm{c}_{\scriptscriptstyle eval} \) such that \(Dec(\bm{c}_{\scriptscriptstyle eval},sk)=\phi(m_1,\hdots,m_t)\). Any function $\phi$ involving binary variables can be evaluated using a set of addition and multiplication gates. The remainder of this section analyses homomorphic addition and multiplication of ciphertexts in the proposed scheme.

\subsection{Addition.} Given two ciphertexts $\bm{c}_1$ and $\bm{c}_2$ of the respective plaintexts $m_1$ and $m_2$, where \(\bm{c}_1=M_1 \cdot \bm{1}+\bm{G} \cdot \bm{f}_1+\bm{e}_1\) and \(\bm{c}_2=M_2 \cdot \bm{1}+\bm{G} \cdot \bm{f}_2+\bm{e}_2\), homomorphic addition of $m_1$ and $m_2$ can be performed by computing 
{ \begin{align}
 \bm{c}_{\scriptscriptstyle add} &=\bm{c}_1+  \bm{c}_2~~(\text{mod}~q) \notag \\
  & =(M_1 + M_2)\cdot \bm{1}+\bm{G}  (\bm{f}_1+\bm{f}_2)+(\bm{e}_1+\bm{e}_2)~~(\text{mod}~q)
\end{align}}%
Since $ \langle \bm{s}, \bm{G}  (\bm{f}_1+\bm{f}_2) \rangle =0~(\text{mod}~q)$ and given that, $M_i=m_i \cdot p$ and $\bm{e}_i=(\bm{0},\overline{\bm{e}}_i)$ for $i \in \{1,2\}$, where $\overline{\bm{e}}_i$ is a vector of order $n-\scaleto{dim(I_{\leq r})}{10pt}$ and its entries are sampled from the distribution $\mathcal{X}$, we have
\begin{align}
\langle \bm{s},\bm{c}_{\scriptscriptstyle add}\rangle =(m_1+ m_2) \cdot \sigma_{\!\scriptstyle \bm{s}}\, p+ \langle \bm{s}_2,\overline{\bm{e}}_1+\overline{\bm{e}}_2 \rangle~(\text{mod}~q)
\end{align}
 If \({e}_{\scriptscriptstyle add}:= \langle \bm{s}_2,\overline{\bm{e}}_1+\overline{\bm{e}}_2 \rangle~(\text{mod}~q)\), then $  \bm{c}_{\scriptscriptstyle add} $ decrypts correctly to $ (m_1+m_2)~\text{mod}\,2$ as long as \((m_1+ m_2) \cdot \sigma_{\!\scriptstyle \bm{s}}\, p  < \lfloor q/2 \rfloor\) and {\small \(\abs{{e}_{\scriptscriptstyle add}} < \left\lfloor {(\sigma_{\!\scriptstyle \bm{s}}\cdot p)}/{2} \right\rfloor \)}. The distribution of $  {e}_{\scriptscriptstyle add} $ is Gaussian with a standard deviation $ \sqrt{2}\norm{\bm{s}_2}_2\alpha q $. It can be easily verified that due to the increase in the standard deviation after addition, the probability of decryption error increases by a factor of 2.

\subsection{Multiplication.} Multiplication is based on the fact that, given two polynomials $f_1,f_2 \in \mathbb{F}_q[x_1,\hdots,x_\ell]_{\leq r}$ and an evaluation point $\bm{z} \in \mathbb{F}_q^\ell$, 
{\small \begin{align}
f_1(\bm{z}) \cdot f_2(\bm{z})=\left(f_1 \cdot f_2\right) (\bm{z})
\end{align}}%
where deg$(f_1f_2) \leq 2r$. Therefore, $f_1f_2 \in I$ but it is not an element of the subspace $I_{\leq r}$. In order to do homomorphic multiplication, $n$ must be greater than the dimension of $\mathcal{V}_{\scriptstyle I_{\leq 2r}}$ and the vector $\bm{s}=(\bm{s}_1,\bm{s}_2)$ must be an element of {\small$ (\mathcal{V}_{\scriptstyle I_{\leq 2r}} )^{\perp}$} such that $\bm{s}_2 \in \mathbb{F}_q^{n-\scaleto{dim(I_{\leq 2r})}{6.5pt}}$. The noise must be added to the last $n-\scaleto{dim(I_{\scriptstyle \leq r})}{10pt}$ entries.

Given two ciphertexts $\bm{c}_1$ and $\bm{c}_2$, the homomorphic multiplication of its respective plaintexts $m_1$ and $m_2$ can be obtained by taking the component-wise product of $\bm{c}_1$ and $\bm{c}_2$. In order to maintain the invariant structure \(\langle \bm{s},\bm{c} \rangle=m\cdot (\sigma_{\!\scriptstyle \bm{s}}\, p)+\langle \bm{s}_2,\overline{\bm{e}} \rangle\) for decryption, we multiply the product by \(1/p\). Specifically, if \(\bm{c}_1=M_1 \cdot \bm{1}+\bm{G} \cdot \bm{f}_1+\bm{e}_1\) and \(\bm{c}_2=M_2 \cdot \bm{1}+\bm{G} \cdot \bm{f}_2+\bm{e}_2\), then 

\vspace{-1ex}
{ \small \begin{align} 
\bm{c}_{\scriptscriptstyle mult} &=\frac{1}{p}\left(\bm{c}_1\odot \bm{c}_2\right) ~~(\text{mod}~q) \notag \\
&=(m_1 m_2)p\cdot \bm{1}+m_1\cdot \bm{G}\bm{f}_2+m_1 \cdot \bm{e}_2+m_2\cdot \bm{G}\bm{f}_1+m_2 \cdot \bm{e}_1 \notag \\ &\quad +\frac{1}{p}\left[\bm{G}\bm{f}_1 \odot \bm{e}_2+\bm{e}_1 \odot \bm{G}\bm{f}_2+(\bm{G}\bm{f}_1 \odot \bm{G}\bm{f}_2)+(\bm{e}_1 \odot \bm{e}_2)\right] ~~(\text{mod}~q)
\end{align}}%
where $\odot$ denotes the component-wise product of two vectors. Given that, $M_i=m_i \cdot p$ and $\bm{e}_i=(\bm{0},\overline{\bm{e}}_i)$ for $i \in \{1,2\}$, where $\overline{\bm{e}}_i$ is a vector of order $n-\scaleto{dim(I_{\leq r})}{10pt}$ and its entries are sampled from the distribution $\mathcal{X}$,  
{ \begin{align} \scaleto{
\langle \bm{s},\bm{c}_{\scriptscriptstyle mult}\rangle =(m_1 m_2) \cdot \sigma_{\!\scriptstyle \bm{s}}\, p+\left\langle \bm{s}_2, m_1 \cdot\overline{\bm{e}}_2+m_2 \cdot \overline{\bm{e}}_1+\frac{1}{p}\cdot(\overline{\bm{e}}_1\odot\overline{\bm{e}}_2) \right\rangle~(\text{mod}~q) }{22pt}
\end{align}}
If {\small \( {e}_{\scriptscriptstyle mult} := \left\langle \bm{s}_2, m_1 \cdot\overline{\bm{e}}_2+m_2 \cdot \overline{\bm{e}}_1+\frac{1}{p}\cdot(\overline{\bm{e}}_1\odot\overline{\bm{e}}_2) \right\rangle~(\text{mod}~q) \)}, then $ \bm{c}_{\scriptscriptstyle mult} $ decrypts correctly to $m_1  m_2$ as long as {\small \(\abs{{e}_{\scriptscriptstyle mult}}< \left\lfloor {(\sigma_{\!\scriptstyle \bm{s}}\cdot p)}/{2} \right\rfloor\)}, where $ {e}_{\scriptscriptstyle mult} $ is a random variable with mean zero and standard deviation $< \sqrt{2}\alpha q+\frac{1}{\sqrt{p}}(\alpha q)^2 $.

While homomorphic addition does not impose any constraints on the ciphertext size, it does lead to an increase in noise. This limits the number of additions possible. Homomorphic multiplication introduces more noise than addition and the size of the ciphertext increases significantly with the number of multiplications possible. The increase in noise by homomorphic operations can be countered by using techniques such as modulus switching \cite{brakerski2014efficient,brakerski2014leveled}. (The complete paper will contain a more detailed discussion about the same.) Countering the increase in ciphertext size for homomorphic multiplication remains a challenge.

\section{Security}  \label{security}
  We show that the proposed scheme is semantically secure based on the hardness of the Hidden Subspace Membership problem. In a Chosen Plaintext Attack (CPA) model, the adversary has a number of plaintext-ciphertext pairs at its disposal. A symmetric key encryption scheme is said to be semantically secure or indistinguishable under a chosen plaintext attack (IND-CPA) if, given sufficient samples (plaintext-ciphertext pairs), no adversary can distinguish between the encryptions of 0 from the encryptions of messages of its choice with probability more than $\frac{1}{2}$. 
\begin{definition}
{\bf{(IND-CPA Security)}}. The IND-CPA security of a symmetric encryption scheme can be defined in terms of the game shown in Figure \ref{box:box4}. 
A PPT adversary $\mathcal{A}$ selects two messages $(m_0,{m}_1)$ such that one of them is 0 and the Left-Right oracle outputs the encryption of one of the messages by choosing $\beta \oset{\scriptscriptstyle{\$}}{\leftarrow} \{0,1\}$. $\mathcal{A}$ wins the game if it can guess the value of $\beta$ with a non-negligible advantage defined by 
\end{definition}
\vspace{-1ex}
{\footnotesize \begin{align*}
   {Adv}^{\text {IND-CPA},\mathcal{A}}(\lambda) := \Bigg| Pr[{\mathcal{A}}^{\text{ IND-CPA}}_{\text{Enc}_{sk}(\cdot)}(\lambda)=1]-Pr[\mathcal{A}^{\text{ IND-CPA}}_{\text{Enc}_{sk}(0)}(\lambda)=1] \Bigg|
\end{align*}}%

\vspace{-2ex}
\begin{story}[!ht] \scriptsize
  \framedbox{\begin{multicols}{4} 
  \setlength{\columnsep}{.5cm}
\underline{\textbf{ Initialize}} 
\begin{algorithmic} \scriptsize
\Begin
\State $sk \!\!\leftarrow\!\!  { \textbf{KeyGen}()}$ 
\State $\beta \oset{\scriptscriptstyle{\$}}{\leftarrow} \{0,1\} $
\End 
\end{algorithmic}

  \underline{\textbf{ Encrypt$({0},sk)$} }
\begin{algorithmic} \scriptsize
\Begin
\State $\bm{c} \!\leftarrow\! {\textbf{Enc}({m},sk)}$
\State \Return ${\bm{c}}$
\End 
\end{algorithmic} 

\underline{\textbf{ Left-Right$({m}_0,{m}_1)$} }
\begin{algorithmic} \scriptsize
\Begin
\State $\bm{c} \!\leftarrow\!\!\! {\textbf{Enc}({m}_{\tiny\beta},\!sk\!)}$
\State \Return $\bm{c}$
\End 
\end{algorithmic}  

 \underline{\textbf{ Finalize ($\beta^{\prime}$)}} 
\begin{algorithmic} \scriptsize
\Begin 
\State \Return $(\beta \!=\!\beta^{\prime})$
\End 
\end{algorithmic}  
  \end{multicols}}
\caption[]{IND-CPA Game}
  \label{box:box4}
\end{story}

It is clear from the construction of the scheme that the encryptions of 0 are noisy elements of the subspace $\mathcal{V}_{\scriptstyle I_{\leq r}} \subseteq \mathbb{F}_q^n$. Therefore, distinguishing an encryption of 0 from the encryption of a random message is equivalent to solving the HSM problem as shown in the following theorem. 

\begin{theorem}
 A PPT adversary $\mathcal{A}$ that breaks the IND-CPA security of the proposed scheme with non-negligible advantage $\epsilon$ can be converted into a PPT adversary $\mathcal{B}$ that can solve an instance of the HSM problem with advantage at least $\frac{\epsilon}{2}$. 
\end{theorem}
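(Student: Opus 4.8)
The plan is to build, from a PPT IND-CPA attacker $\mathcal{A}$ with advantage $\epsilon$, a PPT distinguisher $\mathcal{B}$ for a suitable instance of HSM. First I would pin down that instance. By construction an honest encryption of $0$ is $\bm{G}\bm{f}+\bm{e}$, where $\bm{f}$ is the coefficient vector of a uniformly random $f\in I_{\leq r}$ and $\bm{e}=(\bm{0},\overline{\bm{e}})$ with the last $n-dim(I_{\leq r})$ entries drawn from $\mathcal{X}$. Because the evaluation map $I_{\leq r}\to\mathbb{F}_q^{n}$ is, by the second condition imposed on $\{\bm{z}_1,\dots,\bm{z}_n\}$, a surjection onto a space of dimension $dim(I_{\leq r})$ and hence a bijection onto $\mathcal{V}_{\scriptstyle I_{\leq r}}$, the vector $\bm{G}\bm{f}$ is uniform over $\mathcal{V}_{\scriptstyle I_{\leq r}}$. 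Thus an encryption of $0$ is exactly a ``subspace vector plus noise'' sample for the subspace $\mathcal{S}:=\mathcal{V}_{\scriptstyle I_{\leq r}}\subseteq\mathbb{F}_q^{n}$ and the degenerate noise distribution $\mathcal{N}:=(\bm{0},\mathcal{X}^{n-dim(I_{\leq r})})$, so $\mathcal{B}$ will play $\mathrm{HSM}_{n,q,\mathcal{N}}$ against this subspace.

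Next I would describe the simulation. $\mathcal{B}$ runs $\mathcal{A}$ and relays its oracle calls. Whenever $\mathcal{A}$ asks the encryption oracle for a fresh encryption of $0$, $\mathcal{B}$ invokes the \textbf{Sample} oracle of HSM and forwards the returned vector; by the previous paragraph this is a correctly distributed encryption of $0$, independently of the hidden HSM bit $\beta$. When $\mathcal{A}$ submits its pair $(m_0,m_1)$ to the \textbf{Left-Right} oracle (one of them being $0$), $\mathcal{B}$ picks a fresh bit $d\in\{0,1\}$ uniformly, calls the \textbf{Challenge} oracle of HSM to obtain $\bm{v}$, and returns $\bm{c}:=m_d\cdot p\cdot\bm{1}+\bm{v}\pmod q$. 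When $\mathcal{A}$ halts with a guess $\beta'$, $\mathcal{B}$ outputs $1$ (its guess ``$\beta=1$'') if $\beta'=d$ and $0$ otherwise.

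The analysis conditions on $\beta$. If $\beta=1$, then $\bm{v}$ is a noisy element of $\mathcal{V}_{\scriptstyle I_{\leq r}}$, i.e.\ a genuine encryption of $0$, so $\bm{c}=m_d\cdot p\cdot\bm{1}+Enc(0)$ is distributed exactly as a real challenge ciphertext of $m_d$, and $\mathcal{A}$'s entire view equals that of the true IND-CPA game with hidden bit $d$; hence, after orienting $\mathcal{A}$'s output, $\Pr[\beta'=d\mid\beta=1]=\tfrac{1}{2}+\epsilon$. If $\beta=0$, then $\bm{v}$ is uniform on $\mathbb{F}_q^{n}$, so $\bm{c}$ is uniform and independent of $d$ and of $\mathcal{A}$'s view, whence $\Pr[\beta'=d\mid\beta=0]=\tfrac{1}{2}$. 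Combining, $\Pr[\mathcal{B}\text{ wins}]=\tfrac{1}{2}(\tfrac{1}{2}+\epsilon)+\tfrac{1}{2}\cdot\tfrac{1}{2}=\tfrac{1}{2}+\tfrac{\epsilon}{2}$, so $Adv^{\mathrm{HSM},\mathcal{B}}\geq\epsilon/2$, and $\mathcal{B}$ is clearly PPT.

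The step I expect to be the main obstacle is verifying that the simulation in the $\beta=1$ branch is statistically perfect: one has to check that pushing a uniform $f\in I_{\leq r}$ through $\bm{G}$ reproduces the exact subspace component used by the HSM \textbf{Sample}/\textbf{Challenge} oracles, and that the public shift by $m_d\,p\,\bm{1}$ is a bijection of $\mathbb{F}_q^{n}$ (so it does not perturb the distribution); both follow from the defining conditions on $\{\bm{z}_1,\dots,\bm{z}_n\}$, and this is exactly why the reduction must be run against $\mathrm{HSM}_{n,q,\mathcal{N}}$ with the degenerate noise $\mathcal{N}$ supported only on the last $n-dim(I_{\leq r})$ coordinates rather than a full-support discrete Gaussian. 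A secondary subtlety is the bookkeeping behind the constant $\epsilon/2$: it relies on the challenge being information-theoretically independent of $d$ conditioned on $\beta=0$, which makes the $\tfrac{1}{2}$ there exact; if one prefers not to read $\mathcal{A}$'s advantage as a guessing advantage, inserting the ``uniform challenge'' distribution as an intermediate hybrid and letting $\mathcal{B}$ flip a coin to decide which adjacent pair to attack recovers the same $\epsilon/2$ bound.
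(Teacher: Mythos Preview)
Your proposal is correct and follows essentially the same reduction as the paper: answer $\mathcal{A}$'s encryption-of-$0$ queries with HSM \textbf{Sample} outputs, simulate the \textbf{Left-Right} challenge by shifting the HSM \textbf{Challenge} vector by $m_d\cdot p\cdot\bm{1}$ for a fresh bit $d$, and conclude from the dichotomy ``$\beta=1$ gives a faithful IND-CPA view, $\beta=0$ makes the challenge uniform and independent of $d$.'' You are in fact more careful than the paper on several points it glosses over: you pin down the HSM instance (subspace $\mathcal{V}_{I_{\leq r}}$, degenerate noise $\mathcal{N}=(\bm{0},\mathcal{X}^{n-\dim I_{\leq r}})$), justify that $\bm{G}\bm{f}$ is uniform on $\mathcal{V}_{I_{\leq r}}$ via the imposed conditions on $\{\bm{z}_i\}$, make $\mathcal{B}$'s output rule explicit, and flag the guessing-versus-distinguishing advantage bookkeeping.
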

\begin{proof}
$\mathcal{B}$ initializes $\mathcal{A}$ with the parameters $(q,n,\mathcal{X})$. When $\mathcal{A}$ asks for an encryption of $0$, $\mathcal{B}$ queries the procedure \textbf{Sample} of the HSM game to get $\bm{v} \leftarrow \bm{v}+ \bm{e}$, where $\bm{e}\oset{\scriptscriptstyle{\$}}{\leftarrow} \mathcal{X}^n$ and returns the vector $\bm{c}=\bm{v}$. Similarly, when $\mathcal{A}$ queries the \textbf{Left-Right} oracle of the IND-CPA game, $\mathcal{B}$ queries the procedure \textbf{Challenge} of the HSM game to get $\bm{v} $ and returns a vector $\bm{c}$ upon choosing $\beta \oset{\scriptscriptstyle{\$}}{\leftarrow}\{0,1\}$ and setting $\bm{c}=\bm{v}+(p\cdot m_\beta )\cdot \bm{1}$. 

If the sample obtained from the \textbf{Challenge} oracle of HSM is a noisy element of $\mathcal{S}$, then $\mathcal{A}$ runs in a similar environment to that of the IND-CPA game and hence, $\mathcal{B}$ outputs $\beta$ with probability $\frac{1}{2}+\epsilon$. On the other hand, if the sample returned is uniform in $\mathbb{F}_q^{ n}$, then $\mathcal{B}$ outputs $\beta$ with probability $\frac{1}{2}$. Therefore, the probability that $\mathcal{B}$ solves the HSM problem is $\geq \frac{1}{2}+\frac{\epsilon}{2}$. Hence, the advantage of $\mathcal{B}$ in solving the HSM problem is $\geq \frac{\epsilon}{2}$.
\end{proof}

\section{Conclusions}
A symmetric key homomorphic encryption scheme has been proposed based on the evaluation of multivariate polynomials. The security of the scheme is based on the hardness of the Hidden Subspace Membership problem. In its current form, the scheme is somewhat homomorphic and for a given ciphertext size, can perform a limited number of homomoprhic additions and multiplications. Future work includes converting the scheme into a fully homomorphic one and designing a public key variant of the same.




%
%

\bibliography{HE_ref}

\begin{thebibliography}{10}
\providecommand{\url}[1]{\texttt{#1}}
\providecommand{\urlprefix}{URL }

\bibitem{albrecht2011polly}
Albrecht, M.R., Farshim, P., Faugere, J.C., Perret, L.: Polly cracker,
  revisited. In: Advances in Cryptology--ASIACRYPT 2011, pp. 179--196. Springer
  (2011)

\bibitem{barkee1994you}
Barkee, B., Can, D.C., Ecks, J., Moriarty, T., Ree, R.: Why you cannot even
  hope to use gr{\"o}bner bases in public key cryptography: an open letter to a
  scientist who failed and a challenge to those who have not yet failed.
  Journal of Symbolic Computation  18(6),  497--501 (1994)

\bibitem{bellare1997concrete}
Bellare, M., Desai, A., Jokipii, E., Rogaway, P.: A concrete security treatment
  of symmetric encryption. In: Foundations of Computer Science, 1997.
  Proceedings., 38th Annual Symposium on. pp. 394--403. IEEE (1997)

\bibitem{benaloh1987verifiable}
Benaloh, J.D.C.: Verifiable secret-ballot elections  (1987)

\bibitem{brakerski2012fully}
Brakerski, Z.: Fully homomorphic encryption without modulus switching from
  classical gapsvp. In: Advances in cryptology--crypto 2012, pp. 868--886.
  Springer (2012)

\bibitem{brakerski2014leveled}
Brakerski, Z., Gentry, C., Vaikuntanathan, V.: (leveled) fully homomorphic
  encryption without bootstrapping. ACM Transactions on Computation Theory
  (TOCT)  6(3), ~13 (2014)

\bibitem{brakerski2011fully}
Brakerski, Z., Vaikuntanathan, V.: Fully homomorphic encryption from ring-lwe
  and security for key dependent messages. In: Advances in Cryptology--CRYPTO
  2011, pp. 505--524. Springer (2011)

\bibitem{brakerski2014efficient}
Brakerski, Z., Vaikuntanathan, V.: Efficient fully homomorphic encryption from
  (standard) lwe. SIAM Journal on Computing  43(2),  831--871 (2014)

\bibitem{cramer2001multiparty}
Cramer, R., Damg{\aa}rd, I., Nielsen, J.: Multiparty computation from threshold
  homomorphic encryption. Advances in cryptology—EUROCRYPT 2001 pp. 280--300
  (2001)

\bibitem{fellows1993combinatorial}
Fellows, M., Koblitz, N.: Combinatorial cryptosystems galore. Contemporary
  Mathematics  168(2),  51--61 (1993)

\bibitem{gentry2009fully}
Gentry, C.: A fully homomorphic encryption scheme. Ph.D. thesis, Stanford
  University (2009)

\bibitem{gentry2011fully}
Gentry, C., Halevi, S.: Fully homomorphic encryption without squashing using
  depth-3 arithmetic circuits. In: Foundations of Computer Science (FOCS), 2011
  IEEE 52nd Annual Symposium on. pp. 107--109. IEEE (2011)

\bibitem{gentry2010simple}
Gentry, C., Halevi, S., Vaikuntanathan, V.: A simple bgn-type cryptosystem from
  lwe. In: Annual International Conference on the Theory and Applications of
  Cryptographic Techniques. pp. 506--522. Springer (2010)

\bibitem{gentry2013homomorphic}
Gentry, C., Sahai, A., Waters, B.: Homomorphic encryption from learning with
  errors: Conceptually-simpler, asymptotically-faster, attribute-based. In:
  Advances in Cryptology--CRYPTO 2013, pp. 75--92. Springer (2013)

\bibitem{herold2012polly}
Herold, G.: Polly cracker, revisited, revisited. In: International Workshop on
  Public Key Cryptography. pp. 17--33. Springer (2012)

\bibitem{lipmaa2003diophantine}
Lipmaa, H.: On diophantine complexity and statistical zero-knowledge arguments.
  In: International Conference on the Theory and Application of Cryptology and
  Information Security. pp. 398--415. Springer (2003)

\bibitem{peikert2009public}
Peikert, C.: Public-key cryptosystems from the worst-case shortest vector
  problem. In: Proceedings of the forty-first annual ACM symposium on Theory of
  computing. pp. 333--342. ACM (2009)

\bibitem{regev2009lattices}
Regev, O.: On lattices, learning with errors, random linear codes, and
  cryptography. Journal of the ACM (JACM)  56(6), ~34 (2009)

\bibitem{regev2010learning}
Regev, O.: The learning with errors problem. Invited survey in CCC p.~15 (2010)

\bibitem{smart2010fully}
Smart, N.P., Vercauteren, F.: Fully homomorphic encryption with relatively
  small key and ciphertext sizes. In: International Workshop on Public Key
  Cryptography. pp. 420--443. Springer (2010)

\bibitem{van2010fully}
Van~Dijk, M., Gentry, C., Halevi, S., Vaikuntanathan, V.: Fully homomorphic
  encryption over the integers. In: Advances in cryptology--EUROCRYPT 2010, pp.
  24--43. Springer (2010)

\bibitem{levy2009survey}
Levy-dit Vehel, F., Marinari, M.G., Perret, L., Traverso, C.: A survey on polly
  cracker systems. In: Gr{\"o}bner Bases, Coding, and Cryptography, pp.
  285--305. Springer (2009)

\end{thebibliography}

\end{document}